\newcommand{\depth}{\mathrm{depth}}
\title{\MakeUppercase{Point Location in Disconnected Planar Subdivisions}}
\author{Prosenjit~Bose, 
        Luc~Devroye,
	Karim~Dou\"{\i}eb, 
	Vida~Dujmovi\'c, 
	James~King, and 
	Pat~Morin}
\begin{document}
\maketitle

\begin{abstract}
Let $G$ be a (possibly disconnected) planar subdivision and let $D$
be a probability measure over $\R^2$.  The current paper shows how to
preprocess $(G,D)$ into an $O(n)$ size data structure that can answer
planar point location queries over $G$.  The expected query time of this
data structure, for a query point drawn according to $D$, is $O(H+1)$,
where $H$ is a lower bound on the expected query time of any linear
decision tree for point location in $G$.  This extends the results
of Collette \etal\ (2008, 2009) from connected planar subdivisions to
disconnected planar subdivisions.  A version of this structure, when
combined with existing results on succinct point location, provides a
succinct distribution-sensitive point location structure.
\end{abstract}

\section{Introduction}

Planar point location is the classic search problem in computational
geometry.  The problem asks us to preprocess a planar subdivision
$G$ so that we can quickly test, for any query point $p$, which
face of $G$ contains $p$.  Optimal, $O(n)$ space, $O(\log n)$ query
time structures for the point location problem have been known for
over 25 years \cite{egs86,k83,m90,st86}, the precise constants
achievable in the query time are well-understood \cite{as98},
several results exist for distribution-sensitive query times
\cite{acmr00,amm00,amm01a,amm01b,ammw07,cdilm08,cdilm09,i01,i04}, and
sublogarithmic query time data structures exist for transdichotomous
models of computation \cite{c06,cp09,p06}.

The most recent work in the distribution-sensitive setting is by
Collette \etal\ \cite{cdilm08} who give an $O(n)$ space data structure
that preprocesses a connected planar subdivision $G$ and a probability
measure $D$ over $\R^2$ such that a point location query in $G$ can be
answered in $O(H+1)$ expected time.  Here $H$ is a lower-bound on the
expected time required by any linear decision tree for answering queries
on $G$ that are drawn according to $D$.  The expected number of point-line
comparisons needed to answer a query using their data structure is $H +
O(H^{2/3}+1)$.  Their work, which generalizes (and uses) a similar result
for triangulations \cite{ammw07}, leaves open the problem of what to do
when $G$ is disconnected. Disconnected planar subdivisions occur quite
frequently in areas like geographic information systems and cartography,
where disconnected regions occur naturally. (See \figref{falklands}
for example).

\begin{figure}
  \begin{center}\includegraphics[width=5in]{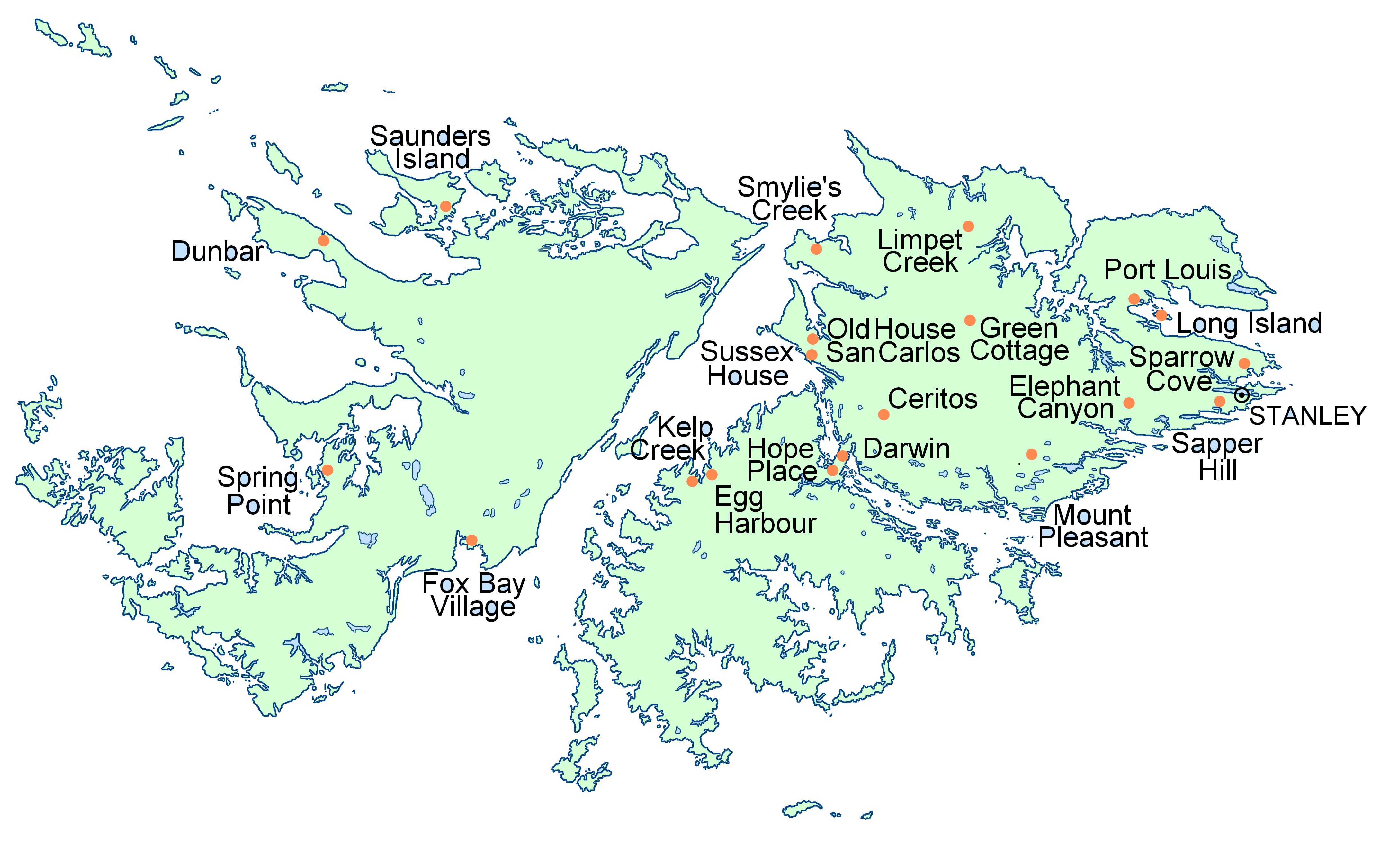}\end{center}
  \caption{A disconnected planar subdivision that occurs in the context of
cartography.}
  \figlabel{falklands}
\end{figure}

In the current paper we show that, for a (possibly disconnected)
planar subdivision $G$, a very different approach can be used to obtain
an expected query time of $O(H+1)$. Essentially, the problem can be
solved by building a $o(n)$-sized data structure for answering the
easy-to-answer queries efficiently and passing all other (hard-to-answer)
queries on to any of the classic $O(n)$ space $O(\log n)$ query time
data structures for planar point location. As a corollary, we obtain
a succinct distribution-sensitive data structure for point location in
(possibly-disconnected) subdivisions.  This data structure stores only
a permutation of the vertices of the subdivision plus an additional
$o(n)$ bits.

\section{Preliminaries}

Throughout this paper, we assume an underlying probability measure
$D$ over $\R^2$.  All expectations and probabilities are (implicitly)
with respect to $D$.  For any subset $X\subseteq\R^2$, $\Pr(X)$ refers
to $D(X)$.  We use the notation $D_{|X}$ to denote the distribution $D$
conditioned on $X$, i.e., $D_{|X}(Y)=\Pr(Y\mid X)=\Pr(X\cap Y)/\Pr(X)$
for all $Y\subseteq\R^2$.  If $\Delta$ is a partition of $\R^2$, then
the \emph{entropy} of $\Delta$, denoted $H(\Delta)$ is
\[
    H(\Delta) = \sum_{t\in \Delta} \Pr(t)\log(1/\Pr(t)) \enspace .
\]
The probability measure $D$ is used as an input to our algorithms.
We assume that the algorithm has access to $D$ through two oracles.
The first oracle allows, for any triangle $t$, to determine $\Pr(t)$
in constant time.  The second oracle, for any triangle $t$, allows the
algorithm to draw a point $p$ according to $D_{|t}$ in constant time.

A \emph{linear decision tree} for point location over $G$ is a rooted
binary tree in which each internal node $v$ is labelled with a linear
inequality $a_vx + b_vy + c_v > 0$, and each leaf $\ell$ is labelled
with a face of $G$.  A query point $p=(x,y)$ follows a root-to-leaf path,
proceeding to the left child of $v$ if it satisfies the inequality and the
right child of $v$ if it does not.  A linear decision tree is \emph{for
point location in $G$} if, for every $p\in\R^2$ the path for $p$ ends at a
leaf labelled with the face of $G$ that contains $p$. In the case where
$p$ lies on an edge or vertex of $G$, the label can be any of the faces
of $G$ incident on that edge or vertex.  The \emph{(expected) cost} of a
linear decision tree is the expected depth of the leaf reached when $p$
is drawn according to the probability measure $D$.

\section{The Data Structure}
\seclabel{data-structure}

In this section we describe our data structure for point location in
disconnected planar subdivisions.  The first tool we use is simplicial
partitions, from the field of geometric range searching:

\begin{thm}[Matou\v{s}ek 1992]\thmlabel{point-partition}
There exists a universal constant $c$ such that, for any set $S$ of $m$
points in $\R^2$ and any $r\in\{1,\ldots,m\}$, there exists a sequence
$\langle \Delta_1,\ldots,\Delta_r\rangle$ of closed triangles such that
  \begin{enumerate}
    \item $S\subseteq \bigcup_{i=1}^r \Delta_i$,
  
    \item $\left|\Delta_i \cap S\setminus
    \left(\bigcup_{j=1}^{i-1}\Delta_j\right)\right| \le 2m/r$, and
  
    \item For any line $\ell$, there are at most $cr^{1/2}$ elements of
  $\{\Delta_1,\ldots,\Delta_r\}$ whose interiors intersect $\ell$.
  \end{enumerate}
  The sequence of triangles $\Delta_1,\ldots,\Delta_r$ can be computed
  in $O(m)$ time.
\end{thm}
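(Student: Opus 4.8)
The plan is to build the triangles greedily, one per round, peeling off a batch of $\Theta(m/r)$ points in each of $r$ rounds. Set $S_1=S$, and having committed to $\Delta_1,\dots,\Delta_{i-1}$ let $S_i = S\setminus(\Delta_1\cup\dots\cup\Delta_{i-1})$ and $m_i=|S_i|$. In round $i$ we select a closed triangle $\Delta_i$ containing between $m/(2r)$ and $2m/r$ points of $S_i$, and once $m_i$ itself has dropped to at most $2m/r$ we simply take $\Delta_i$ to be one triangle enclosing all of $S_i$ and stop. Properties~1 and~2 then hold by construction — every point is eventually covered, and no triangle claims more than $2m/r$ fresh points — so the entire game is to choose the $\Delta_i$ cleverly enough that Property~3 holds: every line stabs the interiors of at most $cr^{1/2}$ of the $r$ triangles.

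The device for controlling crossings is a fixed \emph{test set} $L$ of lines. The key structural fact is that the set of lines crossing a given triangle is, in the dual plane, a union of three double wedges, so the range space whose ground set is a set of lines and whose ranges are ``the lines crossing a triangle'' has bounded VC-dimension. Hence, before any rounds begin, one can fix a test set $L$ — an approximation (in the range-searching sense) of this range space over a suitably dense finite universe of lines, with error of order $r^{-1/2}$ and $|L|$ of order $r$ up to logarithmic factors — with the property that if our eventual partition is ``cheap'' against $L$ (few (triangle, test line) stabbings overall, spread evenly) then it is automatically cheap against \emph{every} line. So it suffices to run the greedy construction so as to keep the triangles cheap with respect to the finite set $L$.

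To produce one cheap yet heavy triangle in round $i$, the natural tool is a cutting: build a $(1/\rho_i)$-cutting of $L$, where $\rho_i$ is calibrated to $\lceil (r\,m_i/m)^{1/2}\rceil$, giving $O(\rho_i^2)=O(r\,m_i/m)$ triangular cells, each crossed by at most $|L|/\rho_i$ lines of $L$. Since the $m_i$ points of $S_i$ are spread over $O(r\,m_i/m)$ cells, the average cell — hence some cell — contains $\Theta(m/r)$ of them; shrinking that cell to a sub-triangle so as to pull its fresh-point count into the target window only decreases its crossing number, and the result becomes $\Delta_i$. Calibrating $\rho_i$ to the current population $m_i$ is what lets the per-round bookkeeping balance out: the expensive (small-$\rho_i$) rounds are exactly the late ones, of which there are few, while the many early rounds have large $\rho_i$ and hence cheap triangles.

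The hard part will be the charging argument that turns this intuition into the clean bound $O(r^{1/2})$ \emph{per line}: one must show that the per-round stabbings sum to $O(r^{1/2})$ across all $r$ rounds for every line of $L$, and then that the approximation property genuinely transfers this bound from $L$ to an arbitrary line. A first, simpler version of this argument loses a $\log r$ factor — crossing number $O(r^{1/2}\log r)$ — which Matou\v{s}ek removes by a bootstrapping step, recursively re-partitioning within each class to refine it and trading the $\log r$ for a constant. The other obstacle is the $O(m)$ time bound: the cuttings must be built in time linear in $|L|$ (by random sampling, in the style of Clarkson and Matou\v{s}ek), and the distribution of the $m$ points among the $O(\rho_i^2)$ cells, round after round, must be maintained in $O(m)$ total; this succeeds only because $|L|$ and each $\rho_i$ depend on $r$ and $m_i/m$ but not on $m$ itself, so the geometric work per round is independent of the input size and the point-distribution work telescopes.
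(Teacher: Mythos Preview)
The paper does not prove this theorem at all: it is stated as a citation of Matou\v{s}ek's 1992 result and used as a black box. The only remark the authors add is that Part~2 (the bound of $2m/r$ on the fresh points in each triangle) is not in Matou\v{s}ek's original statement but follows from his construction. There is therefore no proof in the paper to compare your attempt against.

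For what it is worth, your sketch is a faithful high-level outline of Matou\v{s}ek's actual argument --- greedy peeling, a bounded-VC test set of lines, $(1/\rho)$-cuttings to locate a heavy cell with low crossing number, and the bootstrapping that removes the $\log r$ factor --- and the greedy peeling is precisely why Part~2 drops out of the construction, matching the paper's one-line justification. But as a self-contained proof it remains a plan rather than an argument: the charging that bounds the per-line crossing number by $O(r^{1/2})$, the transfer from the test set $L$ to arbitrary lines, and the $O(m)$ running-time analysis are all asserted rather than carried out. Since the paper itself treats the theorem as a citation, this level of detail is arguably appropriate here, but you should be aware that what you have written is an outline of someone else's proof, not a proof.
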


Note that Part~2 of \thmref{point-partition} is not in the original
statement of the theorem, but follows from Matou\v{s}ek's construction
of $\Delta_1,\ldots,\Delta_r$ \cite{m92}.
Restating \thmref{point-partition} in terms of probability distributions,
we have:

\begin{thm}\thmlabel{prob-partition}
There exists a universal
constant $c$ such that, for any probability measure $D$ over $\R^2$ and any 
integer $r\ge 1$, there exists a sequence
$\langle \Delta_1,\ldots,\Delta_r\rangle$ of closed triangles such that
  \begin{enumerate}
    \item $\Pr\left\{\bigcup_{i=1}^r \Delta_i\right\} = 1$,
  
    \item $\Pr\left\{\Delta_i \setminus
    \left(\bigcup_{j=1}^{i-1}\Delta_j\right)\right\} \le 3/r$, and
  
    \item For any line $\ell$, there are at most $cr^{1/2}$ elements of
    $\{\Delta_1,\ldots,\Delta_r\}$ whose interiors intersect $\ell$.
  \end{enumerate}
  The sequence $\Delta_1,\ldots,\Delta_r$ of triangles can be computed
  in $O(r^3\log r)$ time.
\end{thm}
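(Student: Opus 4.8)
\emph{Proof proposal.}\ The plan is to derive this continuous statement from the discrete one, \thmref{point-partition}, by replacing $D$ with an empirical distribution. First I would draw a multiset $S$ of $m$ points independently from $D$ (possible in $O(m)$ time using the second oracle), where $m=\Theta(r^3\log r)$. Such an $m$ suffices because triangles have constant Vapnik--Chervonenkis dimension, so the family of all regions of the form ``a triangle minus the union of at most $r-1$ triangles'' --- the only regions we will need to reason about --- forms a range space of VC dimension $O(r\log r)$, and hence, by the $\epsilon$-approximation theorem, with probability bounded away from $0$ the sample $S$ is a $\delta$-approximation of $D$ for a parameter $\delta$ equal to a small constant times $1/r$: simultaneously for every triangle $\Delta$ and every region $R$ of the above form,
\[
   \bigl| \Pr(R) - |R \cap S|/m \bigr| \le \delta \enspace .
\]
Fix such an $S$; it can be found --- and, if one wants a Las Vegas guarantee, certified with the $\Pr(\cdot)$ oracle after triangulating the regions involved --- within the stated time bound.

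Next I would run \thmref{point-partition} on $S$ with the same parameter $r$, obtaining $\Delta_1,\ldots,\Delta_r$ in $O(m)$ time. Part~3 of the present theorem is then immediate: it is word for word Part~3 of \thmref{point-partition}, a purely geometric statement about the triangles that refers to neither $S$ nor $D$. Part~2 follows from the approximation property: Part~2 of \thmref{point-partition} gives $\bigl|\,\Delta_i \cap S \setminus \bigcup_{j<i}\Delta_j\,\bigr| \le 2m/r$, so the region $R_i = \Delta_i \setminus \bigcup_{j<i}\Delta_j$ has empirical mass at most $2/r$ and therefore $\Pr(R_i) \le 2/r + \delta \le 3/r$.

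The remaining point --- Part~1, and the only place I expect genuine difficulty --- is that \thmref{point-partition} only promises $S \subseteq \bigcup_i \Delta_i$, so the gaps between the triangles (and any mass of $D$ far from the origin) may keep the union's measure below $1$ and must be mopped up. The mass at infinity is easy to dispose of: either assume $D$ has bounded support (natural in the point-location setting, since all sufficiently distant points lie in the unbounded face of $G$), or, allowing a ``triangle'' to be an intersection of at most three closed half-planes, fix a triangle $\Delta_0$ with $\Pr(\Delta_0)\ge 1-\delta$ and add the three half-planes complementary to the sides of $\Delta_0$, each of which has $D$-mass at most $\delta$ and only three of which any line can meet. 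The bounded gaps are the real issue. I would add the corners of $\Delta_0$ to $S$ so that $\mathrm{conv}(S)=\Delta_0$, and then either appeal to the fact that Matou\v{s}ek's construction in fact covers $\mathrm{conv}(S)$, or triangulate the leftover region $\Delta_0\setminus\bigcup_{i=1}^r\Delta_i$, which has complexity $O(r)$ and contains no point of $S$, into $O(r)$ triangles, each of which has empirical mass $0$ and hence $D$-mass at most $\delta$. The delicate point along this route is that a carelessly chosen triangulation of a sliver-shaped leftover could be stabbed by a single line in $\Omega(r)$ of its pieces; the real work of the proof is to pick the triangulation (or instead to recurse the whole construction on $D$ restricted to the leftover with a geometrically shrinking parameter) so that every line still meets only $O(r^{1/2})$ triangles in all, then to rescale $r$ by a constant so the final count is exactly $r$, and finally to verify the $O(r^3\log r)$ running time.
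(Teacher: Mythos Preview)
Your overall strategy --- sample $m=\Theta(r^3\log r)$ points, apply \thmref{point-partition} to the sample, and transfer condition~(2) back to $D$ via a uniform VC--type deviation bound over the class of ``triangle minus up to $r-1$ triangles'' --- is exactly what the paper does. The sample size, the class of sets, and the repeat-until-success certification all match.

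Where you diverge is Part~1, which you flag as the genuine difficulty and attack with half-planes, leftover triangulations, and crossing-number worries. The paper dispatches it in one line: simply \emph{replace} $\Delta_r$ by any single triangle containing the support of $D$. This costs at most $+1$ in the constant of condition~(3), and --- the point you may have missed --- it does \emph{not} damage condition~(2). Since the original $\Delta_1,\ldots,\Delta_r$ already covered all of $S$, the set $S\setminus\bigcup_{j<r}\Delta_j$ has at most $2m/r$ points; enlarging $\Delta_r$ leaves $\Delta_r^*=\Delta_r\setminus\bigcup_{j<r}\Delta_j$ with exactly those same sample points, and $\Delta_r^*$ is still a member of the VC class $\mathcal{A}$, so the uniform deviation bound applies to it just as before. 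No triangulation of a leftover region, no recursion, no rescaling of $r$ is needed. (The paper is tacitly assuming bounded support here, which it justifies separately for the point-location application; your remark about mass at infinity is well taken for the theorem as literally stated.)
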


\def\isdef{\buildrel {\rm def} \over =}
\def\PROB{\Pr}

\begin{proof}
Assume that $r \ge 2$, otherwise the theorem is trivial.
We will draw an i.i.d.\ sample of $m = \lceil 256r^3\ln r \rceil$
points from $D$ to form a set $S$.  We use the algorithm from Theorem
1 to build a sequence $\langle \Delta_1, \ldots, \Delta_r \rangle$
of triangles satisfying the conditions of Theorem 1.  If necessary we
replace $\Delta_r$ with a triangle that contains the support of $D$
to ensure condition (1) of this theorem is satisfied.  Condition (3) of
this theorem is the same as condition (3) of Theorem 1 and is therefore
trivially satisfied, though it may be necessary to add 1 to the constant
$c$ due to the replacement of $\Delta_r$.

We will prove that, with probability at least $1/2$, the sequence $\langle
\Delta_1, \ldots, \Delta_r \rangle$ also satisfies condition~(2) of
this theorem.  Our oracles allow us to check in constant time whether
this condition is satisfied; we repeat the process until we obtain
a partition that does.  The runtime for this algorithm will then be
geometrically distributed with constant expectation for any constant $r$.

To denote the incremental differences between the triangles we use
\[
  \Delta_i^* = \Delta_i \setminus \bigcup_{j=1}^{i-1} \Delta_j ~.
\]
We will
use $D_m(A)$ to denote the empirical measure of a set $A$:
\[
  D_m (A) \isdef {{|S\cap A|} \over m}
~.
\]  
By condition~(2) of \thmref{point-partition},
we have
\[
  \sup_{1 \le i \le r} D_m(\Delta_i^*) \le { 2 \over r }~.
\]
Now,
\begin{eqnarray*}
\PROB \left\{ \sup_{1 \le i \le r} D ( \Delta^*_i ) > {3 \over r} \right\}
&=& \PROB \left\{ \cup_{1 \le i \le r} \left[ D ( \Delta^*_i ) - D_m(\Delta_i^*)  >  {3 \over r} - D_m(\Delta_i^*) \right] \right\} \\
&\le& \PROB \left\{ \cup_{1 \le i \le r} \left[ D ( \Delta^*_i ) - D_m(\Delta_i^*)  > {3 \over r} - {2 \over r}  \right] \right\} \\
&=& \PROB \left\{ \sup_{1 \le i \le r} \left(  D ( \Delta^*_i ) - D_m(\Delta_i^*) \right) > {1 \over r} \right\} \\
&\le& \PROB \left\{ \sup_{A \in {\cal A}} \left( D ( A ) - D_m ( A ) \right)  > {1 \over r} \right\} \\
\end{eqnarray*}
where $\cal A$ are sets formed by taking a closed triangle and subtracting
at most $r-1$  closed triangles from it.
The class $\cal A$ for $r=1$ is the class of all triangles.
It has Vapnik-Chervonenkis dimension at most 7. 
By Sauer's  lemma 
\cite{s72}\cite[Pages~28--29]{dl01},
the number of subsets of an $m$-point set that can be obtained by intersections
with sets from $\cal A$ does not exceed $(m+1)^7$.
Assume now general $r$. Then the number of subsets of an $n$-point set that can be obtained by intersections
with sets from $\cal A$ does not exceed $(m+1)^{7r}$,
by a simple combinatorial argument.
Then, by a version of the Vapnik-Chervonenkis Inequality 
\cite{vc71} shown by Devroye \cite{d82},
\[
\PROB \left\{ \sup_{A \in {\cal A}} \left| D ( A ) - D_m ( A ) \right|  \ge t \right\}
\le 4 e^{4t+4t^2} \left( m^2 + 1 \right)^{7r}  e^{-2mt^2} \enspace , 
\]
for any $t > 0$.  Thus, 
\begin{eqnarray*}
\PROB \left\{ \sup_{1 \le i \le r} D ( \Delta^*_i )  > {3 \over r} \right\}
&\le& 4 e^{4/r + 4/r^2} \left( m^2 + 1 \right)^{7r}  e^{-2m/r^2} \\
&\le& 2^{2+7r}  e^{8}  m^{14r}  e^{-2m/r^2} \\
&\le& \exp \left( 31r\ln m - \frac{2m}{r^2} \right)  ~.
\end{eqnarray*}
Since we have $m=\lceil 256r^3\ln r \rceil$, this upper bound is less than $1/2$, as desired.  This concludes the proof.
\end{proof}

Assume, without loss of generality, that all vertices of $G$ and the support
of $D$ are contained in the unit square $[0,1]^2$.  This can easily be
justified by scaling and translation, so that $G$ is contained in $[0,1]^2$, and performing 4 point-line comparisons
to check that the the query point is in $[0,1]^2$ before using the data structure to answer a query.

We use \thmref{prob-partition} to recursively construct a \emph{partition
tree} $T$.  Let $\alpha > 0$ be a constant that will be specified below.
Refer to \figref{simp-part}. At the root of $T$, we find the sequence of
triangles $\Delta=\langle\Delta_1,\ldots,\Delta_r\rangle$ and construct
the arrangement of triangles in $\Delta$.  

\begin{figure}
  \begin{center}
    \begin{tabular}{ccc}
      \includegraphics[scale=0.77]{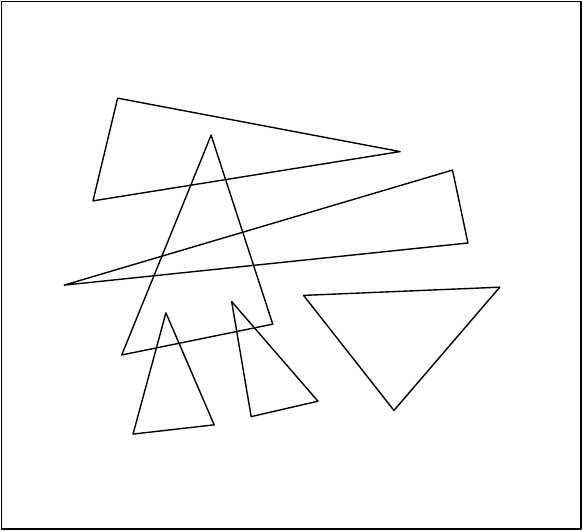} &
      \includegraphics[scale=0.77]{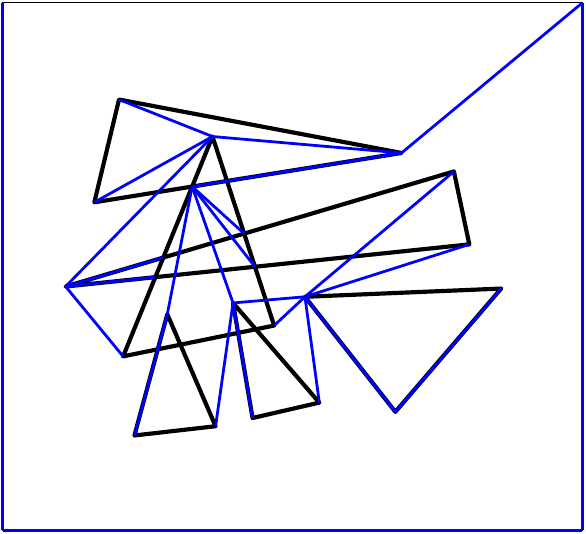} &
      \includegraphics[scale=0.77]{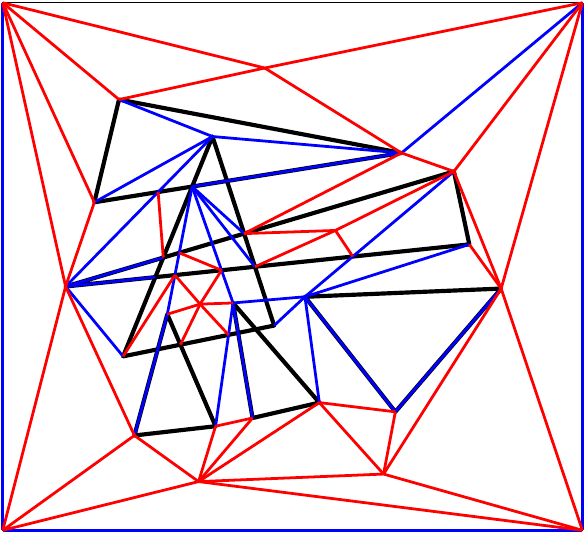} \\
      (a) & (b) & (c) \\
    \end{tabular}
  \end{center}
  \caption{The triangles of a simplicial partition (a) form an
    arrangement of triangles to which (b) a spanning tree is added, and
    (c) the faces of the resulting connected subdivision are (Steiner)
    triangulated to form a Steiner triangulation $A$.}
  \figlabel{simp-part}
\end{figure}

Next, we describe how to triangulate this arrangement while maintaining
the properties of \thmref{prob-partition}. Let $\mathcal{V}$ be set of
$3r+4$ points that make up the vertices of the triangles in $\Delta$ plus
the vertices of a square $\Box$ that contains all triangles in $\Delta$.
A classic result of Haussler and Welzl \cite{hw87} proves that $V$ has
a spanning tree $T(V)$ such that any line crosses $O(r^{1/2})$ edges of
$T(V)$, and this spanning tree can be constructed efficiently \cite{cw89}.
(See \figref{simp-part}.b.)

Consider the line segment arrangement $L$ consisting of the union of
the edges in $T(V)$, the triangles in $\Delta$, and the edges of $\Box$.
Note that any line $\ell$  intersects $O(r^{1/2})$ edges of the
arrangement $L$; $O(r^{1/2})$ of these intersections are generated by
edges corresponding to edges of $T(V)$ and $O(r^{1/2})$ are generated
by edges of triangles in $\Delta$.  What remains is to show how to
triangulate the faces of $L$ without introducing too many crossings.

By construction, each face $F$ of $L$, except the outer face, is a
(weakly) simple polygon having $O(r)$ vertices and edges on its boundary.
By a result of Hershberger and Suri \cite{hs95}, there exists a Steiner
triangulation, $A(F)$, of $F$ using $O(r)$ vertices such that any chord
of $F$ intersects $O(\log r)$ edges of $A(F)$. We therefore triangulate
the arrangement $L$ by triangulating each of its faces in this way.
This gives a Steiner triangulation $A$ of $L$ in which any line intersects
$O(r^{1/2}\log r)$ edges of $A$.
(See \figref{simp-part}.c.)

Next, each face $F$ of $A$ becomes a child of the root of $T$.  If the
interior of $F$ is contained in a single face of $G$ then we call $F$ a
\emph{terminal leaf} and label $F$ with the face of $G$ that contains it.
If the current depth of recursion is greater than $\lfloor\alpha \log_{r}
n\rfloor$ then $F$ becomes a \emph{non-terminal leaf} of $T$.  Otherwise
($F$ intersects two or more faces of $G$ and its depth is small),
we recursively apply the same procedure on the distribution $D_{|F}$
to obtain a partition tree that becomes a child of the root.

This construction defines a tree $T=T(G,D)$ in which
each node has $O(r^2)$ children and whose height is at most
$\alpha\log_{r} n$.  The number of nodes of $T$ at level $i$ is most
$(O(r^2))^i=O(r^{i(2+\epsilon)})$ and therefore the total number of nodes
in $T$ is $(O(r^2))^{\alpha \log_{r} n+1} = O(n^{2\alpha+\epsilon})$, where
$\epsilon>0$ is a decreasing function of $r$.  Note that, for $\alpha <
1/2$ and sufficiently large $r$, the size of $T$ is $o(n)$.

In addition to the tree $T$ we construct a backup data structure $T'$ that
can answer point location queries in $G$ in $O(\log n)$ worst-case time.
To answer a query, $T$ and $T'$ are used as follows:  We search top-down
in $T$ for the query point. If this search ends at a terminal leaf
$F$ of $T$ then we report the label at $F$ and the query is complete.
Otherwise we use $T'$ to answer the query in $O(\log n)$ time.

\section{Analysis}
\seclabel{analysis}

Collette \etal\ \cite{cdilm08,cdilm09} show that, up to a lower-order
term, the expected number of comparisons performed by the optimal
decision tree for point location in $G$ is equal to the entropy of the
minimum-entropy Steiner triangulation of $G$.

\begin{thm}[Collette \etal\ 2008]\thmlabel{triangulation}
Let $G$ be a planar subdivision and let $D$ be a probability measure
over $\R^2$.  Let $T^*$ be a  minimum-entropy Steiner triangulation of
$G$ and let $H^*$ be the entropy of $T^*$.  Then any linear decision tree
for point location in $G$ has expected cost at least $H^*-O(\log H^*)$.
\end{thm}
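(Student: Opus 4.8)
The plan is to convert any linear decision tree for point location in $G$ into a Steiner triangulation of $G$ of only slightly larger entropy, and then invoke the minimality of $H^*$. So let $\mathcal{T}$ be such a tree and let $C$ be its cost. First I would look at the partition $\mathcal{P}$ of $\R^2$ induced by the leaves of $\mathcal{T}$: the leaf $\ell$ is reached exactly by the points of the convex (possibly unbounded) cell $C_\ell$ obtained by intersecting the at most $\depth(\ell)$ halfplanes labelling the root-to-leaf path to $\ell$. Since $\mathcal{T}$ is \emph{for} point location in $G$, every point in the interior of $C_\ell$ that does not lie on an edge or a vertex of $G$ must lie in the one face of $G$ that labels $\ell$; hence, up to a boundary set of $D$-measure zero (or absorbing it into the tie-breaking latitude allowed for points on edges and vertices), $\mathcal{P}$ refines $G$. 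Because the root-to-leaf paths of a binary tree form a prefix code, Kraft's inequality together with Shannon's source-coding bound give $C \ge H(\mathcal{P})$.

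Next I would triangulate $\mathcal{P}$ into a Steiner triangulation $\mathcal{Q}$ of $G$. Clip each cell $C_\ell$ to a fixed bounding box containing the support of $D$ (this changes no probabilities); the result is a convex polygon with $O(\depth(\ell))$ edges, which I triangulate into $t_\ell = O(\depth(\ell))$ triangles, and --- if the notion of Steiner triangulation in force requires triangles to meet edge-to-edge --- I then apply the standard further subdivision removing T-vertices, which multiplies each $t_\ell$ by at most a constant. Every resulting triangle lies inside a single face of $G$, so $\mathcal{Q}$ is a Steiner triangulation of $G$ and therefore $H^* \le H(\mathcal{Q})$. Since $\mathcal{Q}$ refines $\mathcal{P}$, the grouping (chain) rule for entropy gives $H(\mathcal{Q}) = H(\mathcal{P}) + \sum_\ell \Pr(C_\ell)\, h_\ell$, where $h_\ell \le \log t_\ell$ is the entropy of the refinement of $C_\ell$ into its $t_\ell$ triangles. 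Because $t_\ell = O(\depth(\ell))$ and $\log$ is concave, Jensen's inequality bounds $\sum_\ell \Pr(C_\ell)\log t_\ell$ by $\log\bigl(O(C)\bigr) = \log C + O(1)$, using $\sum_\ell \Pr(C_\ell)\depth(\ell) = C$.

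Combining the two estimates yields $H^* \le H(\mathcal{Q}) \le C + \log C + O(1)$. A short case analysis finishes the argument: if $C \ge H^*$ the desired inequality is immediate, while if $C < H^*$ then $\log C \le \log H^*$ and hence $C \ge H^* - \log C - O(1) \ge H^* - O(\log H^*)$. I expect the one genuinely delicate step to be the triangulation one: one must check that clipping unbounded cells and --- above all --- making the triangulation conform edge-to-edge inflate the per-cell triangle count $t_\ell$ by no more than a constant factor, and one must dispose of the measure-zero boundary / tie-breaking issue so that $\mathcal{P}$ honestly refines $G$. The two information-theoretic ingredients, $C \ge H(\mathcal{P})$ and the grouping-plus-Jensen estimate, are entirely standard.
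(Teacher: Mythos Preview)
The paper does not itself prove this theorem; it is quoted from Collette \etal\ \cite{cdilm08,cdilm09}, so there is no in-paper argument to compare against. Your outline is nonetheless essentially the proof given in that reference: pass from a linear decision tree of cost $C$ to the convex partition $\mathcal{P}$ induced by its leaves, use Kraft together with Shannon's bound to get $C\ge H(\mathcal{P})$, fan-triangulate each cell into $t_\ell=O(\depth(\ell))$ triangles to obtain a Steiner triangulation $\mathcal{Q}$ of $G$, and bound the entropy increase $H(\mathcal{Q})-H(\mathcal{P})$ by $\log C+O(1)$ via the chain rule and Jensen. The final inversion from $H^*\le C+\log C+O(1)$ to $C\ge H^*-O(\log H^*)$ is exactly as you wrote it.

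Two remarks on the step you flag as delicate. First, in this line of work a \emph{Steiner triangulation} of $G$ means a collection of interior-disjoint triangles covering the plane, each contained in a single face of $G$; there is no edge-to-edge (conforming) requirement. Hence the T-vertex worry does not arise, and the per-cell fan triangulation already yields a legitimate $\mathcal{Q}$ with $H^*\le H(\mathcal{Q})$. Second, if one \emph{did} insist on a conforming triangulation, your parenthetical claim that removing T-vertices inflates each $t_\ell$ by only a constant factor is not correct in general: a shallow leaf can border arbitrarily many deeper leaves along a single edge, forcing an unbounded number of extra subdivisions inside that one cell. Fortunately, since the operative definition is the permissive one, this does not affect the validity of your argument.
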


Thus, our goal is to prove that our query time approximates the entropy
of the minimum entropy Steiner triangulation of $G$.  We begin by showing
that the partition tree $T$ has small \emph{visiting number} \cite{hw87}.

\begin{lem}\lemlabel{crossing}
  Let $\epsilon > 0$, and let $T$ be the partition tree defined
  in \secref{data-structure} using a value $r$ such that $r>(c\log
  r)^{1/\epsilon}$ for some (sufficiently large) constant $c$.  Then the
  number of nodes of $T$ whose depth is at most $i$ that are intersected
  by any line $\ell$ is $O(r^{i(1/2+\epsilon)})$.
\end{lem}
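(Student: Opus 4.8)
The plan is to argue by induction on $i$, tracking how many nodes at each level of the partition tree a fixed line $\ell$ can stab. The base case $i=0$ is trivial since the root is a single node. For the inductive step, suppose $\ell$ intersects $N_{i}=O(r^{i(1/2+\epsilon)})$ nodes at depth $i$. Each such node $F$ at depth $i$ is a face of a Steiner triangulation $A$ of a simplicial-partition arrangement, and its children are exactly the faces of the next-level Steiner triangulation built inside $F$. The children of $F$ that $\ell$ stabs are precisely the faces of that inner triangulation whose interiors meet $\ell\cap F$; since $\ell\cap F$ is a single segment (a chord of $F$, as $F$ is convex — or can be handled as $O(1)$ chords in the weakly-simple case), the construction in \secref{data-structure} guarantees that $\ell$ crosses $O(r^{1/2}\log r)$ edges of the triangulation $A$ restricted to $F$, hence passes through $O(r^{1/2}\log r)$ of the children of $F$.

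Summing over all $N_i$ stabbed nodes at depth $i$, the number of stabbed nodes at depth $i+1$ is $O(N_i \cdot r^{1/2}\log r) = O(r^{i(1/2+\epsilon)} \cdot r^{1/2}\log r)$. The key step is then to absorb the stray $\log r$ factor into the $r^{\epsilon}$ slack: since we have chosen $r$ large enough that $r^{\epsilon} > c\log r$ for the appropriate constant $c$ (this is exactly the hypothesis $r>(c\log r)^{1/\epsilon}$), we get $r^{1/2}\log r \le r^{1/2+\epsilon}$, and the bound at depth $i+1$ becomes $O(r^{(i+1)(1/2+\epsilon)})$, completing the induction. One should be a little careful that the hidden constants in the $O(r^{1/2}\log r)$ per-node bound do not compound across the $i$ levels; this is handled by folding the constant from the triangulation lemma into $c$ as well, so that the one-step multiplicative factor is genuinely at most $r^{1/2+\epsilon}$ and the induction carries a clean constant.

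The main obstacle, and the only genuinely delicate point, is the geometric claim that $\ell$ enters only $O(r^{1/2}\log r)$ children of a given depth-$i$ node $F$. This requires knowing that $\ell\cap F$ behaves like a bounded number of chords of $F$ so that the Hershberger–Suri guarantee (any chord meets $O(\log r)$ edges of the Steiner triangulation) applies, and then combining it with the $O(r^{1/2})$ crossing bound coming from the Haussler–Welzl spanning tree and the simplicial partition triangle edges — i.e. re-deriving, locally inside $F$, exactly the global $O(r^{1/2}\log r)$ line-stabbing bound on $A$ that was established for the root-level construction. Once that per-node bound is in hand, the level-by-level summation and the $\log r$-into-$r^{\epsilon}$ absorption are routine.
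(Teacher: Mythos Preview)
Your proposal is correct and follows essentially the same approach as the paper: set up the level-by-level recurrence $x(i)\le (cr^{1/2}\log r)\cdot x(i-1)$ with $x(0)=1$, solve it as $(cr^{1/2}\log r)^i$, and absorb $c\log r$ into $r^{\epsilon}$ via the hypothesis $r>(c\log r)^{1/\epsilon}$. The only difference is presentation: the paper simply recalls from the construction section that any line meets $O(r^{1/2}\log r)$ children of a node (each node is already a triangle, so your worry about the weakly-simple case and bounded-chord argument is unnecessary), whereas you re-argue this geometric fact and fret about compounding constants---both concerns are legitimate but already disposed of by the paper's construction and by folding the per-level constant into $c$, exactly as you suggest.
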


\begin{proof}
  Recall that each node of $T$ corresponds to a triangle and $T$ has
  the property that the number of children of any node intersected by
  any particular line $\ell$ is $O(r^{1/2}\log r)$. Therefore, the number
  of nodes $x(i)$ of $T$ at level $i$ that intersect $\ell$ is given by
  the recurrence
  \[
     x(i)\le 
      \left\{
       \begin{array}{ll}
          1 & \mbox{for $i=0$} \\
          (cr^{1/2}\log r)\cdot x(i-1) & \mbox{for $i>0$}
       \end{array}
      \right.
\]
which resolves to $(cr^{1/2}\log r)^i= O(r^{i(1/2+\epsilon)})$ for 
$r>(c\log r)^{1/\epsilon}$.
\end{proof}

An \emph{$i$-set} of a rooted tree $T$ is a set of vertices in $T$ all
of which are at distance at most $i$ from the root of $T$ and in which
no vertex in the set is the ancestor of any other vertex in the set.
Note that if $T$ is a partition tree defined in \secref{data-structure}
then an $i$-set of $T$ is a set of disjoint triangles.  We say that
a set of regions $X=\{X_1,\ldots,X_m\}$, $X_i\subseteq\R^2$, is in
\emph{$k$-general position} if there is no line that intersects $k$
or more elements of $X$.

\begin{lem}\lemlabel{independent}
  Let $\epsilon > 0$, let $T$ be the partition tree defined in
  \secref{data-structure} using a value $r>(c\log r)^{1/\epsilon}$ for
  some (sufficiently large) constant $c$, and let $V$ be an $i$-set of
  $T$.  Then $V$ contains a subset $V'\subseteq V$ that is in $k$-general
  position and has size $\Omega(|V|/r^{i(1/2+\epsilon + 4/k)})$.
\end{lem}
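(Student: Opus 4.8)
The plan is to extract the subset $V'$ by a random sampling argument, keeping each triangle of $V$ independently with an appropriate probability $p$ and then discarding triangles to destroy any line that still stabs $k$ of them. First I would fix a depth-$i$ set $V$ of (pairwise interior-disjoint) triangles. By \lemref{crossing}, every line $\ell$ meets at most $M=O(r^{i(1/2+\epsilon)})$ triangles of $V$. I retain each triangle of $V$ independently with probability $p$, to be chosen, obtaining a random subset $W$; its expected size is $p|V|$, so with constant probability $|W|\ge p|V|/2$.

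Next I would bound the number of lines that could still be "bad" for $W$, i.e. stab $k$ or more triangles of $W$. The standard trick is that it suffices to consider \emph{combinatorially distinct} lines: two lines are equivalent if they stab the same subset of triangles of $V$, and the number of equivalence classes is polynomial in $|V|$ — in fact at most $O(|V|^2)$, since the arrangement of the $O(|V|)$ lines bounding the triangles has $O(|V|^2)$ faces and a line's stabbed-set only changes when it crosses one of these lines (one can take a single representative line per cell of the dual arrangement). For a fixed line $\ell$ stabbing a set $S_\ell\subseteq V$ with $|S_\ell|\le M$, the probability that $|W\cap S_\ell|\ge k$ is at most $\binom{M}{k}p^k\le (Mp)^k$. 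So the expected number of bad (combinatorial) lines, each contributing at most $k$ triangles we must delete, is at most $O(|V|^2)(Mp)^k$; deleting at most $k$ triangles per bad line removes an expected $O(k|V|^2(Mp)^k)$ triangles from $W$. Choosing $p$ so that $(Mp)^k = \Theta(1/|V|^2)$, i.e. $p=\Theta(|V|^{-2/k}/M)$, makes this deletion cost $O(1)\cdot$(something) — more carefully, I want $k|V|^2(Mp)^k \le \tfrac{1}{4}p|V|$, which holds for this choice of $p$ once $|V|$ is large (and for $|V|=O(1)$ the lemma is trivial by taking $V'$ a single triangle). After the deletions we are left with a set $V'$ in $k$-general position of size at least $p|V|/2 - p|V|/4 = \Omega(p|V|) = \Omega\!\left(|V|\,/\,(M\,|V|^{2/k})\right)$.

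Finally I would simplify the size bound. We have $M = O(r^{i(1/2+\epsilon)})$, and $|V| \le$ (number of depth-$i$ nodes of $T$) $= O(r^{i(2+\epsilon')})$ for an $\epsilon'$ that can be absorbed; hence $|V|^{2/k} = O(r^{i(4/k + \epsilon'')})$, so the retained size is $\Omega\!\left(|V|/r^{i(1/2+\epsilon+4/k)}\right)$ after renaming the $\epsilon$-type error terms into a single $\epsilon$ (legitimate since all of them are $o(1)$ as $r\to\infty$, which is exactly the regime the lemma stipulates via $r>(c\log r)^{1/\epsilon}$). Since all the "with constant probability" events above can be made simultaneous — $|W|$ large with probability $\ge$ const, expected deletions small so deletions small with probability $\ge$ const by Markov, intersect to get a positive-probability event — a suitable $V'$ exists. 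The main obstacle, and the step I would be most careful about, is the combinatorial bound on the number of distinct stabbed-sets: I need the polynomial bound $O(|V|^2)$ (or any fixed polynomial) so that the union-bound factor is killed by the $p^k$ term, and I must make sure the "delete $k$ triangles per bad line" accounting does not secretly double-count or blow up — phrasing it as "the surviving set is $k$-general position iff for every combinatorial line class the intersection has size $<k$" and deleting greedily handles this cleanly.
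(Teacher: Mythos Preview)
Your argument is correct and follows essentially the same route as the paper: Bernoulli sampling with probability $p\approx r^{-i(1/2+\epsilon+4/k)}$, the crossing bound from \lemref{crossing} giving $M=O(r^{i(1/2+\epsilon)})$ stabs per line, and a union bound over the $O(|V|^2)$ combinatorially distinct test lines determined by the triangle vertices. The only cosmetic difference is that the paper observes this union bound already yields $o(1)$ failure probability, so the random sample itself is in $k$-general position with probability greater than $1/2$ and no deletion step is needed.
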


\begin{proof}
  We will prove the lemma using the probabilistic method \cite{as08}.
  Let $V'$ be a Bernoulli sample of $V$ where each element is selected
  independently with probability $p=r^{-i(1/2+\epsilon+\delta)}$, where $\delta$ is a constant with $\delta > 4/k$.
  We will show that
  \[
     \Pr\left\{
        \mbox{$V'$ is in $k$-general position 
          and $|V'|=\Omega(|V|/r^{i(1/2+\epsilon+\delta)})$}
      \right\} > 0 \enspace ,
  \]
  thus proving the existence of a set $V'$ satisfying the conditions of
  the lemma.

  Consider any line $\ell$. By \lemref{crossing}, $\ell$
  intersects at most $cr^{i(1/2+\epsilon)}$ elements of $V$ for some
  constant $c$. The probability that $\ell$ intersects $k$ or more
  elements of $V'$ is therefore no more than
  \[
    \binom{cr^{i(1/2+\epsilon)}}{k}\cdot p^k
    \le (cr^{i(1/2+\epsilon)}p)^k
    = c^kr^{ki(1/2+\epsilon)-ki(1/2+\epsilon+\delta)}
    = c^kr^{-ki\delta}
  \]
  The nodes in $V$ define a \emph{test set} $L$ of $O(|V|^2)=O(r^{i(4+\epsilon)})$
  lines such that $V'$ is in $k$-general position if and only if no line
  in $L$ intersects $k$ or more elements of $V'$.   The probability that
  any line in $L$ intersects more than $k$ elements of $V'$ is therefore
  at most $O(r^{i(4+\epsilon)}c^kr^{-ki\delta})=O(c^kr^{i(4+\epsilon-k\delta)})=o(1)$ for any
  constant $\delta > 4/k+\epsilon$.

  The above argument shows that the nodes in $V'$ are quite likely
  to be in $k$-general position.  To see that $V'$ is sufficiently
  large, we simply observe that $|V'|$ is a $\mathrm{binomal}(|V|,p)$
  random variable and therefore has median value at least
  $\lfloor{p|V|}\rfloor=\Omega(|V|/r^{i(1/2+\epsilon + \delta)})$.
  In particular, $\Pr\{|V'|\ge \lfloor{p|V|}\rfloor\}\ge 1/2$.  Therefore,
  \[
     \Pr\left\{
        \mbox{$V'$ is in $k$-general position 
          and $|V'|=\Omega(|V|/r^{i(1/2+\epsilon+\delta)})$}
      \right\} \ge 1- (o(1) + 1/2) > 0 \enspace .
  \]
  Setting $\delta$ sufficiently close to (but larger than) $4/k+\epsilon$
  completes the proof.
\end{proof}

We are now ready to show that the search time in our data structure
is a lower bound on the entropy of any Steiner triangulation of $G$.
Recall that, by \thmref{triangulation}, the entropy of a minimum entropy
Steiner triangulation of $G$ is a lower bound on the expected cost of
any linear decision tree for point location in $G$.

\begin{lem}\lemlabel{lower-bound}
  Let $T$ be the partition tree defined in \secref{data-structure},
  let $L$ denote the set of leaves of $T$, and let $H^*=H(\Delta^*)$ be
  the entropy of a Steiner triangulation $\Delta^*$ of $G$.  Then $H^*
  = \Omega(H(L)-1)$
\end{lem}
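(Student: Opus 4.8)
The plan is to compare the entropy $H(L)$ of the leaves of the partition tree $T$ against the entropy $H^*=H(\Delta^*)$ of an arbitrary Steiner triangulation $\Delta^*$ of $G$, by showing that each triangle $\tau\in\Delta^*$ cannot be ``responsible'' for too much of the probability mass sitting at deep leaves of $T$. The key geometric fact is that a leaf $F$ of $T$ at depth $i$ that is \emph{not} a terminal leaf must intersect two or more faces of $G$, and hence its interior meets an edge of $G$. Since $\Delta^*$ is a Steiner triangulation of $G$, every edge of $G$ is covered by edges of triangles in $\Delta^*$; thus the interior of any non-terminal leaf $F$ is stabbed by a line supporting some edge of a triangle $\tau\in\Delta^*$. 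Group the non-terminal leaves at depth $i$ according to which triangle of $\Delta^*$ (or rather, which of the three supporting lines of that triangle) first witnesses this intersection. By \lemref{crossing}, each such line meets only $O(r^{i(1/2+\epsilon)})$ nodes of $T$ at depth $i$, so each triangle $\tau\in\Delta^*$ captures at most $O(r^{i(1/2+\epsilon)})$ leaves of $T$ at depth $i$.

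Next I would turn this counting bound into an entropy bound. Write $H(L) = \sum_{F\in L}\Pr(F)\log(1/\Pr(F))$ and split the sum by the depth $i$ of each leaf; depths range over $0\le i\le \lfloor\alpha\log_r n\rfloor$. For the leaves at depth $i$, assign each one to a triangle $\tau\in\Delta^*$ as above (terminal leaves can be handled separately or simply absorbed, since a terminal leaf at depth $i$ also sits inside some face and the mass there is bounded by the standard recursive argument on $3/r$-shrinking). The point is that the contribution of depth-$i$ leaves to $H(L)$ is, by concavity of $x\log(1/x)$ and the fact that each $\tau$ owns $O(r^{i(1/2+\epsilon)})$ of them with total mass at most $\Pr(\tau)$, at most
\[
  \sum_{\tau\in\Delta^*} \Pr(\tau)\Bigl(\log\frac{1}{\Pr(\tau)} + O\bigl(i\log r\bigr)\Bigr)
  \le H^* + O(i\log r) \enspace ,
\]
using $\log(r^{i(1/2+\epsilon)}) = O(i\log r)$. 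Here the factor $\Pr(\tau)$ is split among the leaves it owns; the extra $O(i\log r)$ per triangle comes from the number of leaves it owns, via the grouping lemma $\sum p_j\log(1/p_j)\le (\sum p_j)\log(k/\sum p_j)$ for $k$ terms. Summing over the $O(\log_r n)$ depths would a priori give a $\log_r n\cdot\log r = \log n$ loss, which is too weak, so the depth sum must be controlled more carefully.

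The main obstacle, then, is precisely this depth accumulation: one cannot afford to pay $O(i\log r)$ at \emph{every} depth independently. The resolution I would pursue is that the probability mass that survives to depth $i$ decays geometrically --- by condition~(2) of \thmref{prob-partition} applied recursively, the total measure of all nodes at depth $i$ is $O((3/r)^i)$ times a bounded factor in the relevant branches, or more precisely the mass reaching depth $i$ inside any fixed branch shrinks. Actually the cleaner route is: the leaves at depth exactly $i$ that are non-terminal have total mass $O(\rho^i)$ for some $\rho<1$ depending on $r$, so $\sum_i O(\rho^i)\cdot O(i\log r) = O(\log r) \cdot \sum_i i\rho^i = O(1)$ once $r$ is a constant, and the terminal-leaf masses at depth $i$ have their $\log(1/\Pr)$ already accounted against $H^*$ via the owning triangle. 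Combining, $H(L)\le H^* + O(1)$, which after rearranging is $H^* = \Omega(H(L)-1)$. The delicate point to get right is making the ``mass at depth $i$ is $O(\rho^i)$'' statement precise when branches terminate at different depths and when $D_{|F}$ is used at each recursive step; \lemref{independent} and the $k$-general-position machinery are there precisely to let us pass from the crude per-line count to a clean charging scheme, so I would expect the final argument to invoke \lemref{independent} to extract, within each $i$-set of non-terminal leaves, a large subfamily in $k$-general position on which the charging to $\Delta^*$ is essentially injective, yielding the entropy comparison with only an additive constant loss.
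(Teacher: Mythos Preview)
Your proposal has two genuine gaps. First, the charging step ``each $\tau$ owns $O(r^{i(1/2+\epsilon)})$ leaves with total mass at most $\Pr(\tau)$'' is unjustified: you assign a leaf $F$ to a triangle $\tau\in\Delta^*$ because a supporting line of an edge of $\tau$ stabs $F$, but such an $F$ need not lie inside $\tau$ at all, so there is no reason $\sum_{F\to\tau}\Pr(F)\le\Pr(\tau)$; without that mass bound the concavity inequality you invoke collapses. Second, the repair ``leaves at depth $i$ have total mass $O(\rho^i)$'' is false: the $3/r$ bound in \thmref{prob-partition} controls each child's \emph{conditional} probability, but there are $\Theta(r^2)$ children at each node, so the total mass at each level is conserved, not geometrically decaying. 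The depth-by-depth accounting therefore really does incur the $\Theta(\log n)$ loss you identified, and the proposal offers no working mechanism to remove it.

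The paper's proof sidesteps both issues by grouping not by depth but by probability. It first strips the terminal leaves to obtain $L'$ (every member of which intersects at least two faces of $G$, hence cannot be contained in any $\tau\in\Delta^*$), buckets $L'$ into $G_i=\{v:2^{-i}\le\Pr(v)\le 2^{-(i-1)}\}$, and then repeatedly applies \lemref{independent} inside each $G_i$ to peel off subgroups $G_{i,j}$ that are in $k$-general position and have size at least $2^{\gamma i}$, leaving a remainder of size $O(2^{\beta i})$ with $\beta<1$. Since no $\tau$ contains an element of $L'$, the three edges of $\tau$ must do the intersecting, and $k$-general position forces $\tau$ to meet at most $3k$ members of each $G_{i,j}$. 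An entropy-refinement inequality then gives $H^*\ge H(L')-\bar H-O(1)$ where $\bar H$ is the entropy of the coarsening $\{\cup G_{i,j}\}$; because each large subgroup has mass at least $2^{-(1-\gamma)i}$ and the remainders contribute $O(1)$ in total, one gets $\bar H\le(1-\gamma)H(L')+O(1)$ and hence $H^*\ge\gamma H(L')-O(1)$. The extraction of $k$-general-position subfamilies via \lemref{independent} is thus the heart of the argument; you gesture toward it at the end, but the depth-based scheme you actually lay out cannot be completed without replacing it by this probability-bucket decomposition.
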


\begin{proof}
  This proof mixes the ideas from the proofs of Lemma~3 by Dujmovi\'c
  \etal\ \cite{dhm09} and Lemma~4 by Collette \etal\ \cite{cdilm08}.

  Let $T'$ be the tree obtained from $T$ by removing all terminal leaves,
  and let $L'$ denote the set of leaves of $T'$.  Note that $L'$ is a Steiner
  triangulation of $G$ and that 
  \[  
     H(L') = H(L) - O(\log r) = H(L) - O(1)
  \]
  since each triangle in $L'$ is partitioned in $O(r^2)$ triangles in $L$. 

  Partition $L'$ into groups $G_1,G_2,\ldots$, where $G_i$
  contains all leaves $v$ such that $1/2^{i-1} \ge \Pr(v) \ge
  1/2^{i}$.  Further partition each group $G_i$ into subgroups
  $G_{i,1},\ldots,G_{i,t_i}$ with the property that each group $G_{i,j}$
  with $j\in\{1,\ldots,t_i-1\}$ is in $k$-general position and has size
  at least $2^{\gamma i}$ for some constant $\gamma > 0$. Furthermore,
  the final group, $G_{i,t_i}$ has size at most $O(2^{\beta i})$, for some
  constant $\beta < 1$.  This partitioning is accomplished by repeatedly
  applying \lemref{independent} to remove a subset $G_{i,j}\subseteq
  G_{i}$ that is in $k$-general position and has size $2^{\gamma i}$,
  stopping the process once the size of $G_i$ drops below $2^{\beta
  i}$. This works provided that we choose $\beta$, $k$, and $r$ so
  that $\beta > ((\log r)/(\log r - 1))(1/2+\epsilon+4/k)$ and set
  $\gamma=\beta - ((\log r)/(\log r - 1))(1/2+\epsilon+4/k)$.

  Now, consider any Steiner triangulation $\Delta^*$ of $G$ and let
  $t$ be a triangle in $\Delta^*$.  Note that $t$ cannot contain any
  triangle in $L'$ since each element in $L'$ is non-terminal in $T$
  and therefore its interior intersects at least two faces of $G$.
  Therefore, any subgroup $G_{i,j}$ intersected by $t$ must intersect one
  of $t$'s three edges. Since each $G_{i,j}$ is in $k$-general position,
  this means that $t$ intersects at most $3k$ elements of $G_{i,j}$.
  It follows \cite[Lemma~3]{cdilm09} that
  \[
    H^* \ge H(L') 
       - H(\{\cup G_{i,j}:i\in\N,\, j \in\{1,\ldots,t_{i,j}\}) 
       - O(1) \enspace .
  \]
  Thus, all that remains is to upper-bound the contribution of $\bar{H}=H(\{\cup G_{i,j}:i\in\N,\, j \in\{1,\ldots,t_{i,j}\})$.
  \begin{eqnarray*}
    \bar{H} &= & H(\{\cup G_{i,j}:i\in\N,\, j \in\{1,\ldots,t_{i,j}\}) \\
     & = & \sum_{i=1}^\infty \sum_{j=1}^{t_{i}} 
         \Pr(\cup G_{i,j})\log(1/\Pr(\cup G_{i,j})) \\
   & = & \sum_{i=1}^\infty
        \left( 
          \sum_{j=1}^{t_{i}-1} 
             \Pr(\cup G_{i,j})\log(1/\Pr(\cup G_{i,j}) 
             + \Pr(\cup G_{i,t_i})\log(1/\Pr(\cup G_{i,t_i}))
        \right) \\
   & \le & \sum_{i=1}^\infty
        \left( 
          \sum_{j=1}^{t_{i}-1} 
             \Pr(\cup G_{i,j})\log(2^{i-\alpha i})
             + i 2^{\beta i - i + 1}
        \right) \\
    & \le & (1-\alpha)H(L') + O(1) \enspace .
  \end{eqnarray*}
  Thus, we have 
  \[  
     H^* \ge H(L') - \bar{H} -O(1) \ge \alpha H(L') - O(1) 
         \ge \alpha H(L) - O(1) = \Omega(H(L) - 1) 
  \]
  as required.
\end{proof}

\begin{thm}
  Let $G$ be a (possibly disconnected) planar subdivision of size $n$
  and let $D$ be a probability measure over $\R^2$.  There exists a data
  structure $T$ that, given $G$ and $D$, can be constructed in $O(n)$
  time, has $O(n)$ size, and can answer point location queries in $G$
  in $O(H^*)$ expected time, where $H^*$ is the expected time to answer
  point location queries in $G$ using any linear decision tree.
\end{thm}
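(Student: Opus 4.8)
The plan is to establish three things: the $O(n)$ space bound, the $O(n)$ construction time, and --- the substantive part --- that the expected query cost is $O(H^*+1)$, where $H^*$ is a lower bound on the expected cost of any linear decision tree for point location in $G$. The first two are bookkeeping. Fix $\alpha<1/2$ and $r$ a sufficiently large constant, large enough that the error term $\epsilon$ from \secref{data-structure} is small and that the hypothesis $r>(c\log r)^{1/\epsilon}$ of \lemref{crossing} and \lemref{independent} holds. Then $T$ has $O(n^{2\alpha+\epsilon})=o(n)$ nodes, each of $O(r^2)=O(1)$ size, and $T'$ is any standard $O(n)$-space, $O(n)$-time, $O(\log n)$-query planar point location structure~\cite{k83}; this gives the space bound. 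The construction proceeds as in \secref{data-structure}: since $r$ is constant, the partition of \thmref{prob-partition}, the spanning tree, and the Steiner triangulation cost $O(1)$ per node, so the dominant term is the $O(n)$ cost of building $T'$ and of handing each node the part of $\partial G$ it needs to detect and label the terminal leaves. I would state these bounds quickly and concentrate on the query time.

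For the query time I would first reduce it to the expected depth in $T$ of the leaf containing the query point. Let $p$ be drawn from $D$ and let $F_p$ be the leaf of $T$ containing $p$. The search descends one node per level, and at each node it locates $p$ among the $O(r^2)=O(1)$ child triangles in $O(1)$ time, so reaching $F_p$ costs $O(\depth_T(F_p))$. If $F_p$ is a terminal leaf we report its label and stop, at total cost $O(\depth_T(F_p))$. If $F_p$ is non-terminal then, by construction, its depth is exactly $\lfloor\alpha\log_r n\rfloor+1$, which for constant $r$ is $\Theta(\log n)$; hence the $O(\log n)$ cost of the fall-back query in $T'$ is also $O(\depth_T(F_p))$. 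In either case the query costs $O(\depth_T(F_p))$, so the expected query time is $O\!\left(\sum_{F}\Pr(F)\,\depth_T(F)\right)$, the sum running over all leaves $F$ of $T$.

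Next I would bound $\sum_{F}\Pr(F)\,\depth_T(F)$ by the leaf entropy $H(L)$. The key observation is that every node of $T$ at depth $d$ is a triangle $t$ with $\Pr(t)\le(3/r)^d$. This is proved by induction on $d$: at a node $t$ we invoke \thmref{prob-partition} on $D_{|t}$, so each piece $\Delta_i^*$ of the resulting partition has conditional probability at most $3/r$; every face of the arrangement $L$ built at $t$ lies inside one of these pieces, and Steiner-triangulating only subdivides those faces further, so every child of $t$ has conditional probability at most $3/r$, i.e.\ $D$-probability at most $(3/r)\Pr(t)$. Consequently $\depth_T(F)\le\log(1/\Pr(F))/\log(r/3)$ for every leaf $F$, and since the leaves of $T$ partition the support of $D$,
\[
\sum_{F\in L}\Pr(F)\,\depth_T(F)\;\le\;\frac{1}{\log(r/3)}\sum_{F\in L}\Pr(F)\log(1/\Pr(F))\;=\;\frac{H(L)}{\log(r/3)}\;=\;O(H(L)) \enspace .
\]
Now \lemref{lower-bound} gives $H^*_{\mathrm{St}}=\Omega(H(L)-1)$, i.e.\ $H(L)=O(H^*_{\mathrm{St}}+1)$, where $H^*_{\mathrm{St}}$ is the entropy of a minimum-entropy Steiner triangulation of $G$; and \thmref{triangulation} says every linear decision tree has expected cost at least $H^*_{\mathrm{St}}-O(\log H^*_{\mathrm{St}})$, which rearranges to $H^*_{\mathrm{St}}=O(H^*+1)$. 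Chaining the three estimates, the expected query time is $O(H(L))=O(H^*_{\mathrm{St}}+1)=O(H^*+1)$, which is the claimed bound (the additive $1$ is only binding when $G$ is trivial relative to $D$).

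The part I expect to need the most care, given the lemmas, is the reduction in the second paragraph: one might worry that the $O(\log n)$ fall-back queries to $T'$ dominate the expectation, and the point is that they occur only for query points landing in non-terminal leaves, which are exactly the leaves already sitting at depth $\Theta(\log n)$ in $T$, so that cost is already absorbed into $\depth_T(F_p)$. The only other mildly delicate step is verifying that the per-level $3/r$ contraction of the measure survives the addition of the Haussler--Welzl spanning tree and the Hershberger--Suri Steiner triangulation; once that is in hand, the rest is just assembling \lemref{lower-bound} and \thmref{triangulation}.
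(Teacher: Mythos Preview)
Your proof is correct and follows the same route as the paper's own argument: reduce the expected query time to $O(H(L))$ by showing $\depth_T(F)=O(\log(1/\Pr(F)))$, then invoke \lemref{lower-bound} and \thmref{triangulation}. In fact you spell out two steps the paper leaves implicit---the $(3/r)$-per-level probability contraction that justifies $\depth_T(F)=O(\log(1/\Pr(F)))$, and the observation that the $O(\log n)$ backup cost is absorbed because non-terminal leaves already sit at depth $\Theta(\log n)$---so your write-up is, if anything, more complete.
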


\begin{proof}
  The data structure is, of course, the partition tree $T$ of
  \secref{data-structure} and some backup structure that can answer
  queries in $O(\log n)$ worst case time in case a query reaches a
  non-terminal leaf of $T$.  The expected time answer queries in $T$ is
  \[
     \sum_{t\in L} \Pr(t)O(\depth_T(t)) = \sum_{t\in L}\Pr(t)O(\log(1/\Pr(t))) = O(H(L)) \enspace .
  \]
  On the other hand, by \lemref{lower-bound} and \thmref{triangulation},
  the expected time required by any linear decision tree for answering
  queries in $G$ is
  \[
      H^* = \Omega(H(L) - 1) \enspace ,
  \]
  which completes the proof.
\end{proof}

We finish by observing that the tree $T$ in \secref{data-structure}
has sublinear size. Indeed, for any constant $0 \le d \le 1$, we can
construct a tree $T$ of size $O(n^d)$ that satisfies the conditions
of \lemref{lower-bound}.  Thus, we can think of $T$ as a sublinear
sized filter that can take any point location structure with $O(\log n)$
worst-case query time and make it into a distribution-sensitive data
structure.  In particular, one can combine $T$ with the succinct point
location structure of Bose \etal\ \cite[Theorem~2]{bchmm09}, to obtain the
following result:

\begin{thm}
  Let $G$ be a (possibly disconnected) planar subdivision of size $n$
  and let $D$ be a probability measure over $\R^2$.  There exists a data
  structure $T$ that, given $G$ and $D$, can be constructed in $O(n)$
  time and can answer point location queries in $G$ in $O(H^*)$ expected
  time, where $H^*$ is the expected time to answer point location queries
  in $G$ using any linear decision tree.  This structure is represented
  as a permutation of the vertices of $G$ and an additional $o(n)$ bits.
\end{thm}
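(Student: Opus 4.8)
The plan is to reuse, almost verbatim, the partition tree $T$ of \secref{data-structure} and to change only the backup structure. First I would fix the parameters of the construction: since $T$ has $O(n^{2\alpha+\epsilon})$ nodes, where $\epsilon$ is a decreasing function of $r$, choosing the constant $r$ large enough and $\alpha<1/2$ yields a tree with $O(n^{d})$ nodes for a constant $d<1$ of our choosing. In the standard setting in which each input coordinate occupies $O(\log n)$ bits, a node of $T$ stores only a constant number of triangle coordinates, and a terminal leaf stores in addition an $O(\log n)$-bit identifier of the face of $G$ that contains it; hence the entire tree, topology included, is encoded in $O(n^{d}\log n)=o(n)$ bits.

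The second step is to use, as the backup structure, the succinct point-location structure of Bose \etal\ \cite[Theorem~2]{bchmm09} in place of the generic $O(n)$-space, $O(\log n)$-query-time structure used in the previous theorem. That structure answers point-location queries in $G$ in $O(\log n)$ worst-case time and is stored as a permutation of the vertices of $G$ plus $o(n)$ additional bits, so the combined representation is a permutation of the vertices of $G$ plus the $o(n)$ bits of the backup structure plus the $o(n)$ bits of $T$, i.e.\ exactly what the theorem claims. The query algorithm is unchanged: search top-down in $T$, return the label if the search reaches a terminal leaf, and otherwise query the backup structure in $O(\log n)$ worst-case time. The analysis is also unchanged, because the only feature of the backup structure it uses is its $O(\log n)$ worst-case query time; thus the expected query time is $\sum_{t\in L}\Pr(t)\,O(\log(1/\Pr(t)))=O(H(L))$, which is $O(H^{*})$ by \lemref{lower-bound} and \thmref{triangulation}. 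The one detail worth checking is that a \emph{non-terminal} leaf $t$ sits at depth $\Theta(\log_{r}n)$ in $T$, so $\log(1/\Pr(t))=\Omega(\log n)$, since each level of the recursion multiplies the conditional measure of a region by at most $3/r$; hence the $O(\log n)$ cost of the backup query charged to $t$ is absorbed into $O(\log(1/\Pr(t)))$, exactly as the cost of descending to a terminal leaf is. Finally, $T$ is constructed in $O(n)$ time as in the previous theorem and the succinct structure of Bose \etal\ is constructed in $O(n)$ time, for $O(n)$ total.

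The only ingredient that goes beyond the previous theorem is this bit-accounting, and that is where I expect the (mild) obstacle to be: one must check that the Steiner points created at every node of $T$ --- those coming from the Haussler--Welzl spanning tree \cite{hw87} and from the Hershberger--Suri Steiner triangulation \cite{hs95} --- can be chosen with coordinates of magnitude $n^{O(1)}$, so that each fits in $O(\log n)$ bits, and that the number of such points summed over all nodes of $T$ is $O(n^{d})$. Granting this, every remaining claim is immediate from the results already established.
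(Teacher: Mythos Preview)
Your proposal is correct and follows exactly the paper's approach: the paper itself gives no formal proof, only the paragraph preceding the theorem, which says to take the $o(n)$-sized partition tree $T$ (with $\alpha<1/2$ and $r$ large enough) and replace the backup structure by the succinct point-location structure of Bose \etal\ \cite[Theorem~2]{bchmm09}. You are in fact more careful than the paper, spelling out the bit-accounting and flagging the coordinate-precision issue for Steiner points, both of which the paper leaves implicit.
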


\bibliographystyle{plain}
\bibliography{entropy3}
\end{document}